\newtheorem{theorem}{Theorem}[section]
\newtheorem{lemma}[theorem]{Lemma}
\newtheorem{claim}[theorem]{Claim}
\newtheorem{proposition}[theorem]{Proposition}
\newtheorem{corollary}[theorem]{Corollary}
\newtheorem{problem}{Problem}
\newproof{proof}{Proof}
\newproof{claim_proof}{Proof of Claim}
\newcommand{\heading}[1]{\medskip\par\noindent{\bf #1}}
\newenvironment{packed_enum}{
	\begin{enumerate}
		\setlength{\itemsep}{1pt}
	    \setlength{\parskip}{0pt}
		\setlength{\parsep}{0pt}
}{\end{enumerate}}
\newenvironment{packed_itemize}{
	\begin{itemize}
		\setlength{\itemsep}{1pt}
	    \setlength{\parskip}{0pt}
		\setlength{\parsep}{0pt}
}{\end{itemize}}
\def\computationproblem#1#2#3{
	\vskip 1ex
	\begin{center}
	\fbox{\begin{tabular}{rp{10cm}}
	{\bf Problem:\enspace}&#1\\
	{\bf Input:\enspace}&#2\\
	{\bf Output:\enspace}&#3\\
	\end{tabular}}
	\end{center}
	\vskip 1ex
}
\def\crossnumber{\mathrm{cr}}
\def\claimqed{\hfill$\diamond$}
\def\eps{\varepsilon}
\def\O{\mathcal{O}{}}
\def\cP{{\sffamily P}}
\def\cNP{{\sffamily NP}}
\def\embed{\hookrightarrow}
\def\clpsat{{\sc Clause-Linked Planar 3-SAT}}
\def\inner{\text{inner}}
\def\outer{\text{outer}}
\def\rest{\upharpoonright}
  \def\calC{{\cal C}}
\def\restcontr#1{{\sc $#1$-RestrictedContract}}
\def\contr{{\sc Contract}}
\def\subcontr{{\sc $\ell$-SubContract}}
\def\matchcontr{{\sc MatchingContract}}
\begin{document}

\title{MSOL Restricted Contractibility to Planar Graphs\tnoteref{support}}
\tnotetext[support]{The conference version of this paper appeared in IPEC 2012~\cite{akkv}.  The
first author acknowledges support of Special focus on Algorithmic Foundations of the Internet, NSF
grant \#CNS-0721113 and mgvis.com {\tt http://mgvis.com}. The second and the third authors are
supported by CE-ITI (P202/12/G061 of GA\v{C}R) and Charles University as GAUK 196213.}

\author[dimacs]{James Abello}
\ead{abello@dimacs.rutgers.edu}
\author[cunicsi]{Pavel Klav\'{\i}k}
\ead{klavik@iuuk.mff.cuni.cz}
\author[cunidam]{Jan Kratochv\'{\i}l}
\ead{honza@kam.mff.cuni.cz}
\author[cunidam,rutgers]{Tom\'{a}\v{s} Vysko\v{c}il}
\ead{whisky@kam.mff.cuni.cz}

\address[dimacs]{DIMACS Center for Discrete Mathematics and Theoretical Computer Science, Rutgers University, Piscataway, NJ}
\address[cunicsi]{Computer Science Institute, Charles University in Prague, Malostransk{\'e} n{\'a}m{\v e}st{\'\i} 25,
        118 00 Prague, Czech Republic.}
\address[cunidam]{Department of Applied Mathematics, Faculty of Mathematics and
	   	Physics,\\Charles University in Prague, Malostransk{\'e} n{\'a}m{\v e}st{\'\i} 25,
        118 00 Prague, Czech Republic.}
\address[rutgers]{Department of Computer Science, Rutgers University, Piscataway, NJ}

\begin{abstract}
We study the computational complexity of graph planarization via edge contraction. 
The problem \contr\ asks whether there exists a set $S$ of at most $k$ edges that when contracted
produces a planar graph. We work with a more general problem called \restcontr{P} in which $S$, in addition, is
required to satisfy a fixed MSOL formula $P(S,G)$. We give an FPT algorithm in time $\O(n^2 f(k))$ which
solves \restcontr{P}, where $n$ is number of vertices of the graph and $P(S,G)$ is (i) inclusion-closed and (ii) inert contraction-closed
(where inert edges are the edges non-incident to any inclusion-minimal solution $S$).

As a specific example, we can solve the $\ell$-subgraph contractibility problem in which the edges
of the set $S$ are required to form disjoint connected subgraphs of size at most $\ell$. This problem
can be solved in time $\O(n^2 f'(k,\ell))$ using the general algorithm. We also show that for $\ell \ge
2$ the problem is \cNP-complete.
\end{abstract}

\begin{keyword}
planar graph\sep
contraction\sep
MSOL formula\sep
FPT algorithm
\end{keyword}

\maketitle

\section{Introduction} \label{sec:introduction}

Graph visualization techniques are thoroughly studied. In many applications visual understanding
of the graph under consideration is important or required. It is commonly accepted that edge
crossings make a plane drawing of a graph less clear, and thus the goal is to avoid them, or reduce
their number. It is now well-known that one can decide fast whether crossings can be avoided at all,
as planarity testing is linear time decidable~\cite{HT}, while determining the minimum number of
crossings needed to draw a non-planar graph is NP-hard~\cite{GJ}. Several variants of planar
visualization of graphs have been considered and explored, including simultaneous embeddings,
book-embeddings, embeddings on surfaces of higher genus, etc.

Marx and Schlotter~\cite{marx} considered planarization of a graph by removing its vertices while
Kawarabayashi and Reed~\cite{kawarabayashi_reed} considered removing its edges.  Another
possible way to planarize a graph is by contracting some of its edges. If the number of contracted
edges is not limited, every connected graph can trivially be contracted into a single vertex, and
thus becomes planar. A graph is \emph{$k$-contractible} if the number of contracted edges is limited
by a number $k$. If $k$ is a part of the input, testing
$k$-contractibility is \cNP-complete~\cite{Asano}. Polynomial-time algorithms are known if one asks
about contraction to a particular fixed planar graph (so called $H$-contractibility); for nice
overviews see~\cite{Hof,bipartite}.  In this paper, we present a fixed-parameter tractable algorithm
for contractibility to planar graphs.

\begin{figure}[t!]
\centering
\includegraphics{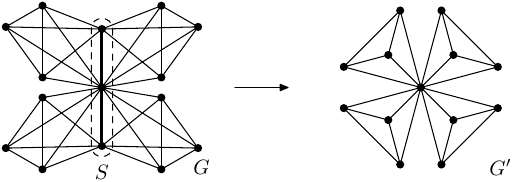}
\caption{An example of a 2-contractible graph.}
\label{fig:planarizing_set}
\end{figure}

\heading{Definitions and Notation.}
In this paper all graphs are simple, i.e., no multiple edges and no loops.
For a graph $G$, we denote by $V(G)$ its vertices and by $E(G)$ its edges (or simply $V$ and $E$,
when the graph is clear from the context). By $G \setminus H$, we denote the subgraph of $G$ induced
by $V(G) \setminus V(H)$.  We denote by $G \circ e$ the graph obtained by contracting an edge $e$ in
$G$.  For a set of edges $S$, we denote a graph created from $G$ by contracting all edges of $S$ by
$G \circ S$. We call $S \subseteq E$ \emph{a planarazing set} of $G$ if $G \circ S$ is a planar
graph; see Fig.~\ref{fig:planarizing_set}. We say that $G$ is \emph{$k$-contractible} if there
exists a planarizing set $S$ of size at most $k$.

MSOL formulas for graphs are logic formulas which contain predicates of equality, incidence and
containment, logic operators and quantifiers for vertices, edges, sets of vertices and edges;
see~\cite{courcelle_book}.  For instance, 3-colorability can be expressed by MSOL as existence of
three sets $V_1$, $V_2$ and $V_3$ of vertices, such that each vertex belongs to exactly one $V_i$,
and there are no edges with both endpoints in one $V_i$. 

\heading{Restricted Contractibility.}
We address the following more general problem. We want to find a planarizing set $S$ of size at most
$k$ that satisfies an additional restriction: a monadic second-order logic (MSOL) formula $P(S,G)$
fixed for the problem.\footnote{More precisely, we have different formulas $P_k(e_1,\dots,e_k)$ for
each $k$ where $S = \{e_1,\dots,e_k\}$. So the length of the formula may depend on $k$.}

\computationproblem
{\restcontr{P}}
{An undirected graph $G$ and an integer $k$.}
{Is there a planarizing set $S\subseteq E(G)$ of size at most $k$ satisfying $P(S,G)$ that when contracted
produces a planar graph?}

We want to construct an FPT algorithm for this problem with respect to the parameter $k$. This is
not possible for every MSOL formula $P(S,G)$. For some formulas, the problem is already \cNP-hard
even for $k=0$. For instance, let $P(S,G)$ be the formula: ``For $S = \emptyset$, is $G \circ S$
a 3-colorable graph?'' Then the problem \restcontr{P} is equivalent to testing 3-colorability of planar graphs
which is known to be \cNP-complete~\cite{3-color}.

In this paper, we describe an FPT algorithm which works as follows.  Either the graph is simple (of
small tree-width) and the problem can be solved in a brute-force way. Or we find a small part of the
graph which we can prove to be far from any inclusion-minimal planarizing set. We modify the graph
by contracting this small part, and repeat the process.  Therefore, we need to restrict ourselves to
MSOL formulas for which satisfiablity is not changed by this modification. 

An MSOL formula $P$ is \emph{inclusion-closed} if for every $S$ satisfying $P$ also every $S'
\subseteq S$ satisfies $P$. This property is necessary since the algorithm looks for
inclusion-minimal planarizing sets. A set $B$ of edges of $G$ is called \emph{inert} if it is not
incident with any edge of any inclusion-minimal planarizing set $S$; see Fig.~\ref{fig:inert_sets}a
for an example. A formula $P$ is called \emph{inert contraction-closed} if the following holds for
every inclusion-minimal planarizing set $S$ and every inert set $B$
\begin{displaymath}
P(S, G) \Longleftrightarrow P(S, G \circ B)
\end{displaymath}
Therefore the modification by contraction of inert edges does not change solvability of the problem.

\begin{figure}[t!]
\centering
\includegraphics{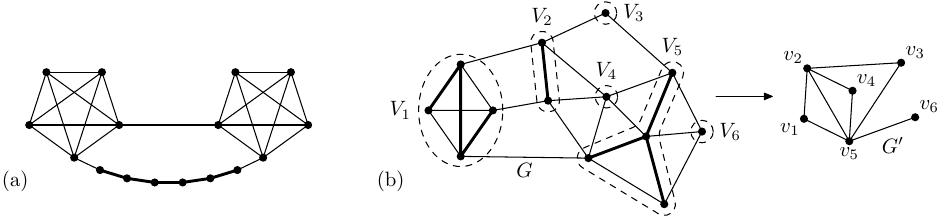}
\caption{(a) Every inclusion-minimal planarizing set $S$ contains one edge in each $K_5$. Therefore,
inert sets are subsets of the highlighted edges. (b) The connected components of a planarizing sets
$S$ act as clusers. After contracting $S$, each cluster corresponds to one vertex in $G'$. Two
vertices in $G'$ are adjacent if and only if there exists an edge in $G$ between the corresponding
clusters.}
\label{fig:inert_sets}
\end{figure}

\begin{theorem} \label{thm:fpt}
For every inclusion-closed and inert contraction-closed MSOL formula $P$, the problem \restcontr{P}\
is solvable in time $\O(n^2 f(k))$ where $n$ is the number of vertices of $G$ and $f$ is a computable function. 
\end{theorem}

Our algorithm uses an approach developed by Grohe~\cite{grohe} which shows that there is
a quadratic-time FPT algorithm for crossing number. The most significant difference is the proof of
Lemma~\ref{lem:flat}.  We cannot use the same approach as that of~\cite{grohe} because
$k$-contractible graphs do not have bounded genus~\cite{Gol} which is essential in~\cite{grohe}.
Further, our approach in the proof of Lemma~\ref{lem:flat} can be modified for the crossing number
which simplifies the argument of Grohe~\cite{grohe}; see Section~\ref{sec:grohe}.

For a trivial formula $P$ that is true for every set of edges, we get the $k$-contractibility
problem considered above.  We note that $k$-contractibility was independently proved to be solvable
in time $\O(n^{2+\eps}\bar{f}(k))$ for every $\eps > 0$ in a recent paper of Golovach et
al.~\cite{Gol}. The algorithm described here uses similar techniques but has a better time
complexity.

\heading{$\ell$-subgraph Contractibility.}
For different formulas $P$, we get problems different from $k$-contractibility having new specific
properties.  As one particular example, we work with a problem which we call $\ell$-subgraph
contractibility. A graph is called \emph {$\ell$-subgraph contractible} if and only if there exists
a planarizing set $S$ such that its edges form disjoint connected subgraphs with at most $\ell$
vertices. For instance, for $\ell=2$ the planarizing set $S$ is required to be a matching, see
Fig.~\ref{fig:matching_planarizing_set}.

\begin{figure}[b!]
\centering
\includegraphics{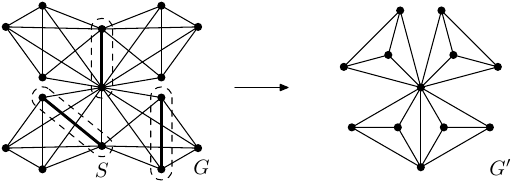}
\caption{For $\ell=2$, when we require that a planarizing set $S$ is a matching, a smallest
planarizing set contains three edges.}
\label{fig:matching_planarizing_set}
\end{figure}

\computationproblem
{\subcontr}
{An undirected graph $G$ and an integer $k$.}
{Is $G$ $\ell$-subgraph contractible by a set $S$ having at most $k$ edges?}

Contraction of a set $S$ can be interpreted as graph clustering, see Fig.~\ref{fig:inert_sets}b.
We want to find clusters such that the resulting cluster graph is planar.  For $\ell$-subgraph
contractibility, every cluster has to be of size at most $\ell$.  In comparison to
$k$-contractibility, the contracted edges have to be more equally distributed in $G$, and thus the
contractions do not change the graph too much.
	
From a graph drawing perspective this approach offers a drawing such that all crossings happen
in disjoint areas nearby the clusters and the rest of the meta-drawing is crossing-free.  Such a
meta-drawing resembles well the original graph and can be well grasped by a glance from the
distance. The local crossings get inspected by taking a magnifying glass for particular clusters.

If $\ell = 1$, the problem is solvable in linear time as it becomes just planarity testing.
For $\ell \ge 2$, we prove:

\begin{proposition} \label{prop:npc}
For $\ell \ge 2$, the problem \subcontr\ is \cNP-complete.
\end{proposition}

Since $\ell$-subgraph contractibility can be expressed using MSOL formulas, we get the following
corollary of Theorem~\ref{thm:fpt}.

\begin{corollary} \label{cor:subcontr}
For every fixed $\ell$, the problem \subcontr\ can be solved in time $\O(n^2 f'_\ell(k))$ where $n$ is
the number of vertices and $f_l'$ is a computable function.
\end{corollary}

\heading{Paper Layout.}
In Section~\ref{sec:fpt}, we describe our FPT algorithm for the \restcontr{P} problem. In
Section~\ref{sec:subcontr}, we deal with the \subcontr\ problem. Last, in Section~\ref{sec:grohe},
we show how to simplify the proof of Grohe~\cite{grohe}.

\section{Restricted Contractibility is Fixed-Parameter Tractable} \label{sec:fpt}

Let $P$ be a fixed inclusion-closed and inert contraction-closed MSOL formula.  In this section, we
show that the problem \restcontr{P}\ is fixed-parameter tractable with respect to the parameter $k$.
Namely, we describe an algorithm which solves \restcontr{P}\ in time $\O(n^2 \cdot f(k))$ for some function $f$.

The basic structure of our algorithm is based on the following idea invented by Grohe~\cite{grohe}.
If the graph has a small tree-width, we solve the problem by Courcelle's Theorem~\cite{courcelle}.
If the tree-width is large, we find an embedded large hexagonal grid and produce a smaller graph to
which we apply the procedure recursively.

\subsection{Definitions}

\begin{figure}[b!]
\centering
\includegraphics{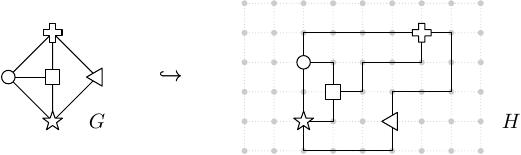}
\caption{An example of a topological embedding $G \embed H$.}
\label{fig:topological_embedding}
\end{figure}

We first introduce notation similar to that of Grohe's in~\cite{grohe}.

\heading{Topological Embeddings.} A topological embedding $h: G \embed H$ of $G$ into $H$ consists of
two mappings: $h_V: V(G) \to V(H)$ and $h_E: E(G) \to P(H)$, where $P(H)$ denotes set of all paths in $H$.
These mappings must satisfy the following properties:
\begin{packed_itemize}
\item The mapping $h_V$ is injective, distinct vertices of $G$ are mapped to distinct
vertices of $H$.
\item For distinct edges $e$ and $f$ of $G$, the paths $h_E(e)$ and $h_E(f)$ are distinct, do not
share internal vertices and share possibly at most one endpoint.
\item If $e = uv$ is an edge of $G$ then $h_V(u)$ and $h_V(v)$ are the endpoints of the path
$h_E(e)$. If $w$ is a vertex of $G$ different from $u$ and $v$ then path $h_E(e)$ does not contain
the vertex $h_V(w)$.
\end{packed_itemize}
For an example, see Fig.~\ref{fig:topological_embedding}. 

It is useful to notice that there exists a topological embedding $h: G \embed H$, if there exists a
subdivision of $G$ which is a subgraph of $H$. For a subgraph $G' \subseteq G$, denote by $h \rest G'$
the restriction of $h$ to $G'$. For simplicity, we use the term \emph{embeddings} instead of
topological embeddings.

\heading{Hexagonal grid.} We define recursively the hexagonal grid $H_r$ of radius $r$  (see
Fig.~\ref{fig:hexgrid}). The graph $H_1$ is a hexagon (the cycle of length six). The graph
$H_{r+1}$ is obtained from $H_r$ by adding $6r$ hexagonal faces around $H_r$ as indicated in
Fig.~\ref{fig:hexgrid}.

The nested \emph{principal cycles} $C^1, \dots, C^r$ are called the boundary cycles of
$H_1,\dots,H_r$ . From the inductive construction of $H_r$, $H_k$ is obtained from $H_{k-1}$ by
adding $C^k$ and connecting it to $C^{k-1}$. A \emph{principal subgrid} $H^s_r$ where $s \le r$
denotes the subgraph of $H_r$ isomorphic to $H_s$ and bounded by the principal cycle $C^s$ of $H_r$.

\begin{figure}[t!]
\centering
\includegraphics{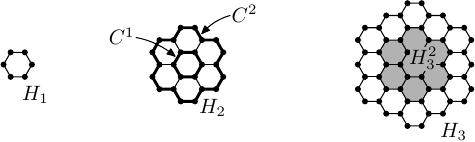}
\caption{Hexagonal grids $H_1$, $H_2$ and $H_3$. Inside $H_2$, the principal cycles $C^1$ and $C^2$ are
depicted in bold. Inside $H_3$, the principal subgrid $H^2_3$ is shown.}
\label{fig:hexgrid}
\end{figure}

\heading{Flat Topological Embeddings.}
Let $H$ be a subgraph of a graph $G$. An $H$-\emph{component} $C$ of $G$ is
\begin{packed_itemize}
\item either a connected component of $G \setminus H$ together with the edges connecting $C$ to $H$
and its incident vertices, or 
\item an edge $e = uv$ and the incident vertices $u$ and $v$ such that $u,v \in V(H)$ and $e \notin
E(H)$.
\end{packed_itemize}
The endpoints of edges of $C$ contained in $H$ are called the \emph{vertices of attachment} of $C$.
Figure~\ref{fig:components}a illustrates the notion of $H$-components.

\begin{figure}[b!]
\centering
\includegraphics{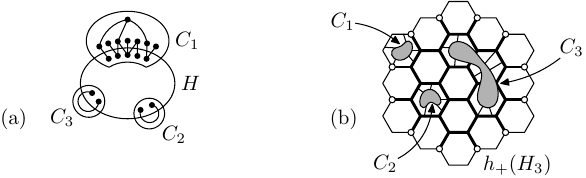}
\caption{(a) $H$-components $C_1$, $C_2$ and $C_3$ of $G$. (b) A subgraph $h_+(H_3)$ consisting of
$h(H_3)$ and three proper components $C_1$, $C_2$ and $C_3$ having attachments to inner vertices of
$h(H_3)$ (highlighted in bold). The embedding $h$ is not flat since $C_3$ obstructs
planarity.}
\label{fig:components}
\end{figure}

Let $G$ be a graph and let $h: H_r \embed G$ be an embedding of a hexagonal grid in $G$. A vertex $v \in h(H_r)$ is called \emph{inner} if $v \in h(H_r) \setminus h(C^r)$.
An $h(H_r)$-component $C$ is called \emph{proper} if $C$ has at least one vertex of attachment
in $h(H_r) \setminus h(C^r)$, namely, the component is attached to an inner vertex of the grid.
Let $h_+(H_r)$ denote the union of $h(H_r)$ with all proper $h(H_r)$-components. Notice that the proper
$h(H_r)$-components may be obstructions to the  planarity of $h_+(H_r)$. The embedding $h$ is called
a \emph{flat embedding} if $h_+(H_r)$ is a planar graph. For an example, see
Fig.~\ref{fig:components}b.

\heading{Tree-width.}
For a graph $G$, its tree-width is an integer $k$ which describes how ``similar'' is $G$ to a
tree~\cite{diestel}.  For our purposes, we use tree-width as a black box in our algorithm. The
following two properties of tree-width are crucial.

\begin{theorem}[Robertson and Seymour~\cite{robsey8},\ \,Boadlaender~\cite{bodla},\,\ and
Perkovi\`{c} and Reed~\cite{perko}]
\label{thm:robsey}
For every $s \ge 1$, there is $t \ge 1$ and a linear-time algorithm that, given a graph
$G$, either (correctly) recognizes that the tree-width of $G$ is at most $t$ or returns an embedding
$h: H_s \embed G$.
\end{theorem}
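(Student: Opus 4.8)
The plan is to combine two classical ingredients with one elementary graph-theoretic observation. The first ingredient is the Excluded Grid Theorem of Robertson and Seymour~\cite{robsey8}: there is a function $w(\cdot)$ such that every graph of tree-width larger than $w(t)$ contains the $t\times t$ square grid $Q_t$ as a minor. The second ingredient is the linear-time algorithmics of Bodlaender~\cite{bodla} and of Perkovi\'c and Reed~\cite{perko}, which turn both the recognition of bounded tree-width and the extraction of a large grid minor into linear-time procedures. The observation is that a \emph{minor} of the hexagonal grid $H_s$ can be upgraded to a \emph{topological} embedding $h: H_s \embed G$, because $H_s$ is subcubic.

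First I would reduce hexagonal grids to square grids: for the given $s$ there is an integer $t = \O(s)$ together with an explicit minor model of $H_s$ inside $Q_t$, obtained by ``drawing'' the hexagonal grid on the square lattice with a constant blow-up factor. Set $w := w(t)$, so that by the Excluded Grid Theorem every graph $G$ with $\text{tw}(G) > w$ contains $Q_t$, and hence $H_s$, as a minor. This fixes the integer $w$ claimed in the statement as a function of $s$ alone.

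Next I would run the linear-time dichotomy. Bodlaender's algorithm~\cite{bodla} decides in linear time whether $\text{tw}(G) \le w$; combined with the refinement of Perkovi\'c and Reed~\cite{perko}, the outcome is either a tree decomposition of $G$ of width at most $w$ --- in which case the algorithm correctly reports $\text{tw}(G) \le w$ --- or a minor model of a grid large enough to contain $Q_t$ in $G$. Composing this model with the fixed model of $H_s$ in $Q_t$ produces, still in linear time, a family of pairwise disjoint connected branch sets $\{B_v : v \in V(H_s)\}$ in $G$ witnessing that $H_s$ is a minor of $G$.

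Finally I would convert this minor model into a topological embedding. Each vertex of $H_s$ has degree at most $3$, so each branch set $B_v$ is joined to at most three neighbouring branch sets; choosing inside $B_v$ a subtree of a spanning tree that spans the (at most three) vertices incident to the edges leaving $B_v$ gives a subdivided star or path with a single branching vertex, which we take as $h_V(v)$, and routing each edge $uv$ of $H_s$ through the corresponding subtrees yields internally disjoint paths realizing $h_E$. This is the standard equivalence of minors and topological minors for subcubic host graphs and is the only point where subcubicity of the hexagonal grid is used. The main obstacle is not any individual step but the linear-time bookkeeping: the Excluded Grid Theorem by itself yields only the existence of the grid minor, and the real content lies in the Bodlaender--Perkovi\'c--Reed machinery, which makes the decision ``bounded width versus explicit large grid minor'' computable in linear time; the passage to a topological embedding is then routine.
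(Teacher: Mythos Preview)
The paper does not give its own proof of this theorem: it is quoted as a black box from the cited references~\cite{robsey8,bodla,perko} (and in fact is lifted essentially verbatim from Grohe~\cite{grohe}, which the present paper follows closely). So there is nothing in the paper to compare your argument against.

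That said, your sketch is the standard reconstruction of why the cited results yield the statement, and the three steps --- reduce $H_s$ to a square grid $Q_t$, invoke the Excluded Grid Theorem to fix $w$, and convert an $H_s$-minor into a topological embedding via the subcubicity of $H_s$ --- are all correct. The one place where your outline is a bit optimistic is the claim that Bodlaender~\cite{bodla} together with Perkovi\'c--Reed~\cite{perko} directly \emph{return} a grid minor model in the large-tree-width branch. Those algorithms are stated as deciding $\text{tw}(G)\le w$ and producing a tree decomposition in the positive case; extracting an explicit $Q_t$-minor (rather than merely knowing it exists) in linear time requires a little more, and is really where the Robertson--Seymour machinery enters on the algorithmic side. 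You correctly flag this as ``the main obstacle'', so you are aware of the issue; just be careful not to attribute the constructive grid output to Perkovi\'c--Reed alone.
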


\begin{theorem}[Courcelle~\cite{courcelle}]
\label{thm:courcelle}
For every graph $G$ of tree-width at most $t$ and every MSOL formula $\varphi$, there exists an
algorithm that decides the formula $\varphi$ on $G$ in time $\O(n \cdot g(t,|\varphi|))$, where $n$ is the number of
vertices of $G$.
\end{theorem}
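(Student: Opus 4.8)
The plan is to reduce model-checking of the fixed formula $\varphi$ on $G$ to a single bottom-up run of a finite tree automaton over a parse tree that certifies the bounded tree-width of $G$. Concretely, I would first compute in linear time (this is where Bodlaender's linear-time algorithm~\cite{bodla}, which underlies Theorem~\ref{thm:robsey}, is used: for bounded $t$ it produces a width-$t$ tree decomposition of $G$) a \emph{parse tree} of $G$: a rooted tree whose nodes are labelled by the operations of a graph algebra acting on graphs that carry at most $t+1$ distinguished \emph{boundary} vertices --- introduce a labelled vertex, add an edge between two labelled vertices, forget a label, and glue two boundary graphs by identifying vertices of equal label. Each subtree rooted at a node $u$ then evaluates to a subgraph $G_u \subseteq G$ with a boundary of size $\le t+1$, and the root evaluates to $G$ itself; the whole object has size $\O(n)$ and is obtained from the tree decomposition in linear time.

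The core of the argument is to show that every MSOL-definable graph property is \emph{recognizable}, i.e.\ decided by a finite bottom-up automaton on such parse trees. Fix $q$ to be the quantifier rank of $\varphi$ and, in order to deal with set quantifiers, work with \emph{expanded} structures that name, via a bounded number of additional unary predicates, an assignment to the free first- and second-order variables. Define the $q$-\emph{type} of a boundary graph to be the finite set of MSOL sentences of quantifier rank $\le q$ over this expanded signature that hold in it. Two facts are needed: (i) there are only finitely many distinct $q$-types of boundary graphs with a boundary of size $\le t+1$, because up to logical equivalence there are only finitely many MSOL sentences of bounded quantifier rank over a fixed finite signature; and (ii) a Feferman--Vaught-style composition lemma: the $q$-type of the graph produced by any single algebra operation is a function of that operation and of the $q$-types of its operands. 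Granting (i) and (ii), the map sending a node $u$ to the $q$-type of $G_u$ (with the relevant part of the variable assignment restricted to the boundary of $G_u$) is computable bottom-up, and whether $G \models \varphi$ can be read off from the type of the root.

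The algorithm is then immediate: make one bottom-up pass over the parse tree, at each node performing a constant-time table lookup in the transition function supplied by the composition lemma --- a table whose size depends only on $t$ and the fixed $\varphi$ --- and accept according to the root's type. With $\O(n)$ nodes and $g(t)$ work per node this runs in time $\O(n \cdot g(t))$, as claimed.

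The main obstacle is establishing the composition lemma together with the finiteness bound on the number of types, and in particular making it go through in the presence of set quantifiers --- which is precisely the reason one cannot argue with plain graphs and must instead reason about expanded structures carrying extra unary predicates. Once that lemma is in hand the remaining points are bookkeeping: the parse tree is of linear size, the automaton is run by a single sweep, and the (large --- roughly tower-type in $q$ --- but constant, since $\varphi$ is fixed) number of automaton states together with the dependence on $\varphi$ is absorbed into the function $g$.
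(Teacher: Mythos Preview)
Your sketch is a faithful outline of the standard proof of Courcelle's theorem via tree automata and a Feferman--Vaught composition lemma, and there is no mathematical gap in the plan as stated. However, there is nothing to compare it against: in the paper this theorem is not proved at all. It is quoted as a known result, attributed to Courcelle~\cite{courcelle}, and used purely as a black box in Phase~II of the algorithm (together with Bodlaender's linear-time tree-decomposition algorithm). The paper's only contribution related to this theorem is the construction of the specific MSOL formula $\varphi_{(k,F)}$ to which it is applied. So your write-up is not an alternative to the paper's proof but rather a proof of a cited tool that the paper deliberately leaves unproved.
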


\subsection{The algorithm}

\heading{Overview.} The general outline of the algorithm is as follows. It proceeds
in two phases. The first phase deals with graphs of large tree-width and repeatedly modifies $G$
to produce a graph of small tree-width. In addition, we keep a set $F \subseteq E$ of forbidden
edges for contractions. Initially, $F$ is empty and during the modification some edges are added.
The second phase reduces \restcontr{P} to solving an MSOL formula which is done by Courcelle's
Theorem~\ref{thm:courcelle}.

\heading{Phase I.} We first prove the following lemma, which states that in an embedded large hexagonal grid
$H_s$ into $G$, we either find a flat hexagonal grid $H_r$ (smaller than $H_s$), 
or else $G$ is not $k$-contractible. This lemma represents the most significant difference from the paper of
Grohe~\cite{grohe}, as illustrated in Section~\ref{sec:grohe}.

\begin{figure}[b!]
\centering
\includegraphics{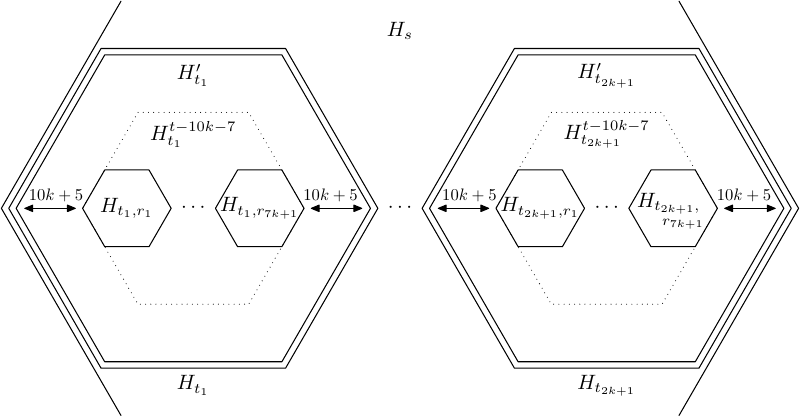}
\caption{The hierarchy of the hexagonal grids nested in $H_s$.}
\label{fig:grid_hierarchy}
\end{figure}

\begin{lemma}
\label{lem:flat}
Let $G$ be a $k$-contractible graph. For every $r \ge 1$, there exists $s \ge 1$ such that for
every embedding $h: H_s \embed G$ there is some subgrid $H_r \subseteq H_s$ such that
$h \rest H_r$ is a flat embedding.
\end{lemma}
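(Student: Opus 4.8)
The plan is to argue by contradiction on the number of ``bad'' subgrids that a putative embedding $h: H_s \embed G$ can contain, using that each non-flat subgrid is witnessed by a proper component obstructing planarity, and that such an obstruction forces at least one contraction ``inside'' a well-localized region of the grid. Since contractions are limited to $k$ disjoint edges, only boundedly many disjoint such regions can be paid for, so for $s$ large enough (as a function of $r$ and $k$) some region of the grid of radius $r$ must be free of obstructions, hence flat.

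First I would make precise the localization. Inside $h(H_s)$ one can find many vertex-disjoint (or at least ``far apart'') copies of the subgrid $H_r$: concretely, partition an $H_s$ into roughly $(s/3r)^2$ internally disjoint principal-type subgrids $H^{(1)}_r, \dots, H^{(m)}_r$, each surrounded by a protective annulus of grid-width at least, say, $2r$ that separates it from all the others and from the outer boundary $C^s$. By Lemma~\ref{lem:minplanset} and the structure of $(k,F)$-contractibility, the planarizing matching $S$ has $|S|\le k$, so the contracted images of its $\le k$ edges, together with the handful of grid edges incident to them, touch at most $O(k)$ of these annuli. Choose $m > $ (this $O(k)$ bound); then some $H^{(j)}_r$ and its surrounding annulus are entirely untouched by $S$ — no edge of $S$ has an endpoint in that region. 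The main claim is then: if $h(H_r)$ restricted to that untouched region were \emph{not} flat, i.e.\ some proper $h(H_r)$-component $C$ obstructs planarity of $h_+(H_r)$, then $G\circ S$ would be non-planar, contradicting that $S$ is a planarizing set. This is the step I expect to be the real obstacle, and it is exactly where we depart from Grohe~\cite{grohe}: we must show that contracting a matching that avoids the region cannot ``repair'' the obstruction. The idea is that a non-flat grid contains, after suitably routing through $C$, a $K_5$ or $K_{3,3}$ subdivision all of whose branch vertices and paths live in $h_+(H^{(j)}_r)$ together with $C$; since $S$ neither contracts any edge of this region nor identifies two of its vertices, this Kuratowski subgraph survives in $G\circ S$ (contraction elsewhere can only add edges/identifications outside, never destroy a subdivision that is disjoint from the contracted edges), so $G\circ S$ is non-planar.

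To turn this into a clean argument I would (i) invoke Theorem~\ref{thm:robsey}-style reasoning implicitly only through the statement, and work purely combinatorially inside $h(H_s)$; (ii) use the well-known fact that a hexagonal grid is $3$-connected and ``uniquely embeddable'' enough that a proper component attached to interior vertices and obstructing planarity yields a fixed Kuratowski obstruction confined to a bounded neighborhood of its attachments inside the grid — here the protective annulus of width $2r$ guarantees the relevant neighborhood stays within the untouched region; and (iii) bookkeep the quantifiers, setting $s := s(r,k)$ large enough that $(s/(3r))^2$ exceeds the number of annuli that $k$ disjoint contracted edges can intersect (a crude bound such as $(s/(3r))^2 > c\,k$ for an absolute constant $c$ suffices). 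The conclusion is that for this $s$, \emph{every} embedding $h: H_s \embed G$ has an untouched $H_r$-subgrid which must be flat, which is the statement of the lemma. The delicate point to get right in the full write-up is precisely the ``survival of the obstruction under far-away contractions'' claim, including the corner case where a contracted edge of $S$ has exactly one endpoint on the boundary of the annulus; widening the annulus to width $2r{+}2$ and re-doing the counting removes this corner case.
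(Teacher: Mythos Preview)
Your plan has a genuine gap at exactly the point you flag as ``the real obstacle,'' and the annulus does not fix it. You guarantee that $S$ avoids the \emph{grid} region $h(H^{(j)}_r)$ together with its surrounding annulus, and then you assert that a Kuratowski witness for the non-flatness of $h(H^{(j)}_r)$ is ``confined to a bounded neighborhood of its attachments inside the grid.'' But non-flatness is a statement about $h_+(H^{(j)}_r)$, which contains the \emph{proper components}, and these are not localized at all: a proper $h(H^{(j)}_r)$-component $C$ lives in $G\setminus h(H^{(j)}_r)$ and may range over the whole graph. Concretely, take a copy of $K_5$ attached by a single edge to one interior vertex of $h(H^{(j)}_r)$. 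This $K_5$ is a proper component, $h_+(H^{(j)}_r)$ is non-planar, yet one edge of $S$ placed inside that $K_5$ (far from the grid and the annulus) planarizes it. So the obstruction need not ``survive'' contractions that avoid your protected region, and your pigeon-hole count on annuli controls nothing about $S\cap C$. Widening the annulus cannot help, because $C$ is a component of $G\setminus h(H^{(j)}_r)$; it is simply not contained in any neighborhood inside $h(H_s)$.

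The paper closes this gap with a \emph{two-level} hierarchy and a large/non-large dichotomy for components, and this extra idea is essential. First one finds, among $2k+1$ disjoint intermediate subgrids $H_{t_i}$, one whose image has no edge of $S$ incident to it; the key claim there is that a \emph{large} $h(H'_{t_i})$-component (one whose attachments do not fit in a single hexagonal cell) yields a $K_{3,3}$ with $K_{3,3}^-$ entirely inside $h(H_{t_i})$, so destroying it forces an $S$-edge incident to that grid. Once large components are excluded, every remaining $h(H'_{t_i})$-component is pinned to at most seven cells, so each $S$-edge (whether it sits in a component or in the grid) produces at most seven ``bad'' cells; now a second pigeon-hole over $7k{+}1$ disjoint radius-$r$ subgrids gives one whose $h_+$ contains no $S$-edge and is therefore (as a subgraph of the planar $G\circ S$) planar, i.e.\ flat. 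Your one-level argument collapses these two steps and thereby loses control over $S$ inside the proper components; to repair it you would essentially have to reinvent the large-component step.
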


\begin{proof}
For given $r$ and $k$, we fix $s$ and $t$ large enough as follows: We choose $s \approx 2kt$ so that
$H_s$ contains $2k+1$ \emph{disjoint subgrids} $H_{t_1},\dots,H_{t_{2k+1}}$ of radius $t$. Let
$H'_{t_i}$, formerly denoted by $H_{t_i}^{t-2}$, be the principal subgrid of $H_{t_i}$ obtained from
$H_{t_i}$ by removing the two outermost layers.  We choose $t \approx (7r+10)k$ so that each principal
subgrid $H^{t-10k-7}_{t_i}$ contains $7k+1$ \emph{disjoint subgrids}
$H_{t_i,r_1},\dots,H_{t_i,r_{7k+1}}$ of radius $r$.\footnote{Actually just $s \approx \sqrt {2k} t$
and $t \approx \sqrt{7k}r+10k$ would be sufficient.} In this way, we get the hierarchy of nested
subgrids as in Fig.~\ref{fig:grid_hierarchy}:
$$H_s \supsetneq H_{t_i} \supsetneq H'_{t_i} \supsetneq H_{t_i,r_j},\qquad
		\text{where } 1 \le i \le 2k+1,\quad 1 \le j \le 7k+1.$$
We next argue using the pigeon-hole principle that for some $H_{t_i,r_j}$ the embedding $h \rest
H_{t_i,r_j}$ is a flat embedding.

Since we are assuming that $G$ is $k$-contractible, we can fix a corresponding planarizing set $S$
and consider one subgrid $H_{t_i}$. Let a \emph{cell} be the $h$-image of a hexagon of $H'_{t_i}$. We
call an $h(H_{t_i})$-component \emph{bad} if it contains an edge from $S$. A cell is considered
\emph{bad} if it contains an edge of $S$ or if there is at least one bad $h(H_{t_i})$-component
attached to the cell. Since bad cells have some obstructions to planarity attached to them, we will
exhibit some grid $H_{t_i,r_j}$ such that its embedding $h(H_{t_i,r_j})$ avoids all bad cells.

To proceed, call an $h(H_{t_i})$-component $C$ \emph{large} if it has two vertices of attachment of
$C$ in $h(H'_{t_i})$ which do not belong to one cell of $H'_{t_i}$. As an example, in
Fig.~\ref{fig:components}b (when $H_3 = H'_{t_i}$) the component $C_3$ is large, but $C_1$ and $C_2$
are not. Large $h(H_{t_i})$-components posses the following useful properties which we prove
afterwards in a series of claims.

\begin{packed_enum}
\item If an $h(H_{t_i})$-component is large, then we can embed $K_{3,3}$ into $h_+(H_{t_i})$. This
implies that there must be some $H_{t_i}$ having no large $h(H_{t_i})$-component, otherwise the graph
would not be $k$-contractible.
\item On the other hand,  if a bad $h(H_{t_i})$-component is not large, it can produce at most
seven bad cells. This implies that $h(H_{t_i})$ must have a number of bad cells bounded by $7k$ and
therefore for some $j$, the embedding $h(H_{t_i,r_j})$ contains no bad cells and it is a flat embedding.
\end{packed_enum}

\begin{claim}
Let $C$ be a large $h(H_{t_i})$-component. Then we can embed $K_{3,3}$ into $h_+(H_{t_i})$ such that
$K_{3,3}^- := K_{3,3} - e$ is embedded into the grid $h(H_{t_i})$.
\end{claim}

\begin{claim_proof}
Instead of a tedious formal proof, we illustrate the main idea in Fig.~\ref{fig:bridge}. If $C$ is
large, it has two vertices $u$ and $v$ in $h(H'_{t_i})$ not contained in one cell. Thus there exists
a path $P$ going across the grid ``between'' $u$ and $v$. Using $C$ as a ``bridge'' from $u$ to $v$,
we can cross $P$ by another path across the grid. These two paths together with the two outer layers of
$h(H_{t_i})$ allow to embed $K_{3,3}$ into $h_+(H_{t_i})$ such that $K_{3,3}^-$ is embedded into
$h(H_{t_i})$.\claimqed

\begin{figure}[t!]
\centering
\includegraphics{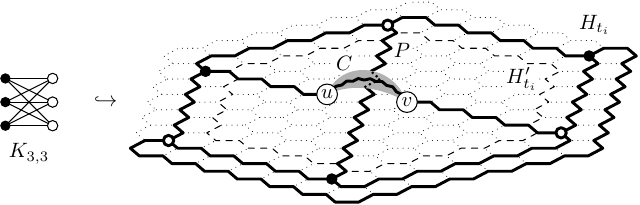}
\caption{The component $C$ acts as a bridge allowing two non-crossing paths across the grid. Thus we
can embed $K_{3,3}$ into $h(H_{t_i})$.}
\label{fig:bridge}
\end{figure}
\end{claim_proof}

\begin{claim}
There is some $H_{t_i}$ such that there is no large $h(H_{t_i})$-component.
\end{claim}

\begin{claim_proof}
According to the above claim, if there exists a large $h(H_{t_\ell})$-component, we can
embed $K_{3,3}$ into $h_+(H_{t_\ell})$. Since $G \circ S$ is a planar graph, this embedding of
$K_{3,3}$ has to be contracted by $S$. To contract it, there has to be an edge $e \in S$
incident with $h(K_{3,3}^-)$, otherwise $G \circ S$ still contains an embedding of $K_{3,3}$.
Therefore $e$ is incident with $h(H_{t_\ell})$. We know that $|S| \le k$ and each edge in $S$ is
incident with at most two grids $h(H_{t_\ell})$. Since we have $2k+1$ disjoint grids, there is
some $H_{t_i}$ such that no edge of $S$ is incident with $h(H_{t_i})$.\footnote{By a more refined
analysis, one can show that only $k+1$ disjoint grids suffice. The reason is that if an edge is
incident to two grids, it contracts neither of the embeddings $h(K_{3,3})$.} Therefore, there is
no large $h(H_{t_i})$-component.\claimqed
\end{claim_proof}

\begin{claim}
For $H_{t_i}$ having no large $h(H_{t_i})$-component, $h(H'_{t_i})$ contains at most
$7k$ bad cells.
\end{claim}

\begin{claim_proof}
Let $e \in S$. Each $h(H_{t_i})$-component $C$ is not large, so it is attached to at most seven
cells of $h(H'_{t_i})$. Therefore, if $e \in C$, we get at most seven bad cells.  If $e$ belongs to
a cell directly, we get two bad cells. Since $|S| \le k$, we get at most $7k$ bad cells.\claimqed
\end{claim_proof}

By the
pigeon-hole principle, there exists one of $h(H_{t_i,r_1}),\dots,h(H_{t_i,r_{7k+1}})$ containing no
bad cells, and we denote it by $H_{t_i,r_j}$. 

\begin{claim} \label{clm:flat}
For $H_{t_i}$ having no large $h(H_{t_i})$-component and $h(H_{t_i,r_j})$ having no bad cells, the
embedding $h \rest H_{t_i,r_j}$ is a flat embedding.
\end{claim}

\begin{claim_proof}
Since $h(H_{t_i,r_j})$ contains no bad cells, clearly it contains no edges of $S$. Let $C$ be a
proper $h(H_{t_i,r_j})$-component, it remains to show that $C \cap S = \emptyset$. Notice that all
above claims involve $h(H_{t_i})$-components, so we need to relate $C$ to them.  If $C$ is an edge,
let $C' = C$. If $C$ contains an edge with one endpoint in $h(H_{t_i,r_j})$ and another endpoint in
$h(H_{t_i}) \setminus h(H_{t_i,r_j})$, let $C'$ be this edge.  Otherwise let $C'$ be a component of $C
\setminus h(H_{t_i})$ together with the edges connecting $C'$ to $h(H_{t_i})$ and its incident
vertices, and assume that $C'$ is attached to an inner vertex $u$ of $h(H_{t_i,r_j})$; it always
exists. Observe that $C'$ is a $h(H_{t_i})$-component.

Since $h(H_{t_i,r_j})$ contains no bad cells, $C' \cap S = \emptyset$.  Also, $C'$ is not attached
to any vertex of $h(H'_{t_i} \setminus H_{t_i,r_j})$, otherwise it would be a large component.
It remains to show that $C'$ is not attached to any vertex of $h(H_{t_i} \setminus H'_{t_i})$ as
well. This concludes the proof since $C \cap h(H_{t_i} \setminus H_{t_i,r_j}) = \emptyset$, so $C 
= C'$ and $C \cap S = \emptyset$.

\begin{figure}[b!]
\centering
\includegraphics{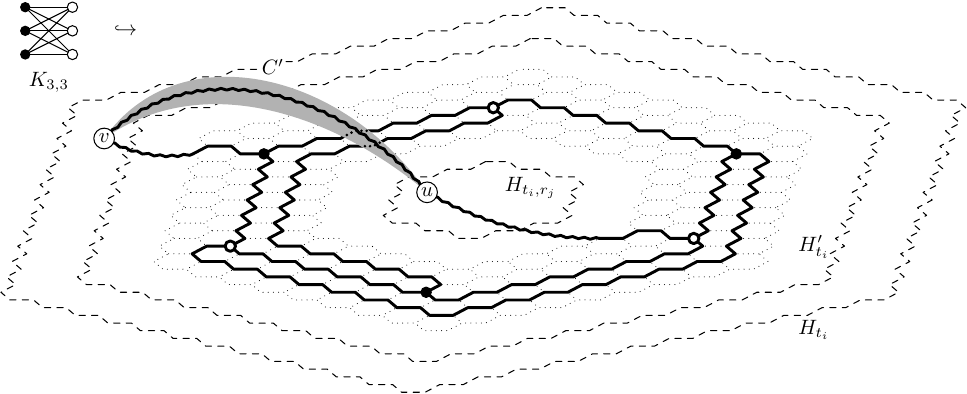}
\caption{When 5 consecutive layers of $h(H_{t_i} \setminus H_{t_i,r_j})$ (depicted with dots) avoid
the edges of $S$, we can use the component $C'$ to embed $K_{3,3}$.}
\label{fig:algorithm8_flat_embedding}
\end{figure}

Suppose that $C'$ is attached to $v \in h(H_{t_i} \setminus H'_{t_i})$. By our assumption, $u$ and
$v$ are at least $10k+5$ layers of the grid $h(H_{t_i})$ apart.  By the pigeon-hole principle, there
exist 5 consecutive layers $L = h(H_{t_i}^q \setminus H_{t_i}^{q-5})$ of the grid $h(H'_{t_i}
\setminus H_{t_i,r_j})$ such that no edge of $S$ is incident with them.  Using $L$, we can embed
$K_{3,3}$ into $h_+(H_{t_i}) \circ S$; see Fig.~\ref{fig:algorithm8_flat_embedding}. We embed
$K_{3,3}-e$ into the middle three layers of $L$. The remaining edge $e$ is embedded in the
outer/inner layers of $L$, together with a path in $h(H_{t_i}) \cup C' \setminus L$, using a path
from $u$ to $v$ through $C'$ as a bridge.  Since edges of $S$ are not incident with $L$, they are
not incident with the embedding of $K_{3,3}-e$, so $K_{3,3}$ remains in $G \circ S$, contradicting
that $S$ is a planarizing set.  \claimqed
\end{claim_proof}

Following the above claims, we get for $H_r = H_{t_i,r_j}$ a flat embedding $h \rest H_r$,
concluding the proof.\qed
\end{proof}

The next lemma shows that a small part of $h_+(H_r)$, where $H_r$ is the hexagonal grid which we got from 
Lemma~\ref{lem:flat}, is never contracted by a minimal set $S$. Let the \emph{core} $K$ of $h_+(H_r)$ 
denote the $h$-image of the central principal cycle $h(C^1)$ together with the $h(H_r)$-components 
attached only to $h(C^1)$.

\begin{lemma} \label{lem:core}
Let $G$ be a $k$-contractible graph and let $S$ be a minimal planarizing set of $G$. Let $h:
H_r \embed G$, where $r \ge 4k+3$, be a flat embedding and let $K$ be the core of $h_+(H_r)$. Then the
edges of $G$ incident with the vertices of $K$ do not belong to $S$.
\end{lemma}

\begin{proof}
We define a \emph{ring} $R_i$ as $h(H_r^{2i+1} \setminus H_r^{2i-1})$, i.e., it is the $h$-image of
the two consecutive principal cycles of $H_r$ together with the edges between them. An
\emph{extended ring} $R_i^+$ is the union of $R_i$ with all the $h(H_r)$-components having  all
vertices of attachment in $R_i$. See Fig.~\ref{fig:ring2}a. Consider $2k+1$ disjoint extended rings
$R_1^+,\dots,R_{2k+1}^+$ and notice that for some $i$, no edge of $S$ is incident with a vertex of
$R_i^+$. Denote $R := R_i$, $R^+ := R_i^+$ and the principal cycles of this ring as
$C_\outer := h(C^{2i+1})$ and $C_\inner := h(C^{2i})$.

\begin{figure}[t!]
\centering
\includegraphics{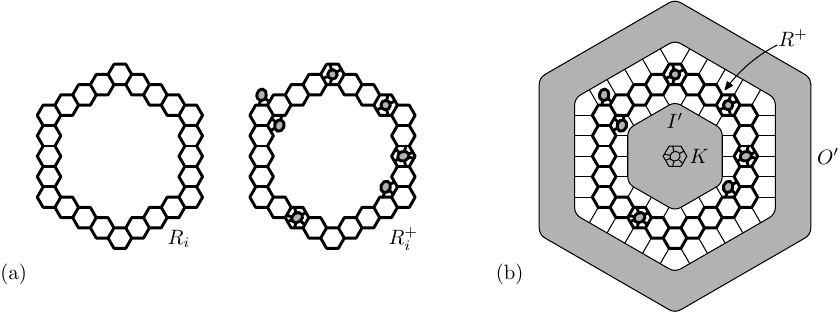}
\caption{(a) An example of a ring $R_i$ and an extended ring $R_i^+$. (b) The extended ring $R^+$
(in bold) splits $G' \setminus R$ into an outer part $O'$ and an inner part $I'$. The inner part is
connected and contains $H_r^{2i-1}$ and $K$.}
\label{fig:ring2}
\end{figure}

The graph $G' := G \circ S$ is planar with a planar embedding $\Delta'$ as follows; see
Fig.~\ref{fig:ring2}b.  The ring $R$ itself is a subdivision of a 3-connected graph, and therefore
it has a unique embedding up to the choice of an outer face. We choose a drawing having $C_\outer$
as the outer face of $\Delta' \rest R$. For the extended ring $R^+$, the embedding of the attached
components is not unique, but they are attached somehow to $R$.

Now, $G' \setminus R^+$ is split by $R^+$ into two parts: the \emph{inner part} $I'$ lying inside the face
bounded by $C_\inner$ and the \emph{outer part} $O'$ inside the outer face bounded by $C_\outer$.
Moreover, we can assume for $\Delta'$ that $I'$ is connected, containing $h(H_r^{2i-1}) \circ
S$. The reason is that a connected component of $I'$, not containing the grid, is a separate component
of $G'$, so we choose $\Delta'$ such that it is drawn into the outer face.

Since contraction does not change connectivity in $G$ and all contractions avoid $R^+$, the subgraph
$G \setminus R^+$ is separated into parts $I$ (containing $h(H_r^{2i-1})$ and especially $K$) and
$O$ (containing the rest) such that $I \circ S = I'$ and $O \circ S = O'$ with no edges between $I$
and $O$. We show next that the minimality of $S$ forces that $I \cap S = \emptyset$.

We take the embedding $\Delta'$, remove $\Delta' \rest I'$ and replace it with some embedding of
$I$.  Since $h(H_r)$ is a flat embedding, the graph $h_+(H_r)$ is planar, so in particular the
subgraph induced by $R^+ \cup I$ is planar. We consider one of its planar embeddings $\Delta$ having
$I$ embedded into the inner face of $R^+$. Since $\Delta \rest I$ has the same orientation of edges
to $R$ as $\Delta' \rest I'$, it is possible to replace the embedding of $I'$ in $\Delta'$ by
$\Delta \rest I$. This means that it is not necessary to contract the edges of $I$ and therefore $I
\cap S = \emptyset$. The statement follows since $K$ and its incident edges belong to $I$.\qed
\end{proof}

We note that the algorithm just contracts $K$ and not the whole $I$ since we do not know which
extended ring $R_i^+$ avoids edges of $S$.

\begin{figure}[t!]
\centering
\includegraphics{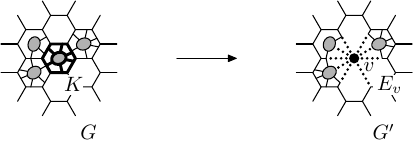}
\caption{The graph is modified by contraction of the core $K$ and adding its incident edges $E_v$
to $F$.}
\label{fig:core}
\end{figure}

Recall that $F$ is the set of forbidden edges to contract. If the graph $G$ is $k$-contractible by a
planarizing set $S$ such that $S \cap F = \emptyset$, we say that $G$ is
\emph{$(k,F)$-contractible}. As we proceed with our algorithm, we use Lemma~\ref{lem:core} to modify the
input $G$ and $F$ to a smaller graph $G'$ and an extended set $F'$ as follows.  The core $K$ is
contracted into a single vertex $v$, so $G' := G \circ K$.  Let $E_v$ be the set of edges incident
with $v$. We add them to $F$, so $F' := F \cup E_v$.  Figure~\ref{fig:core} depicts this
modification.

\begin{lemma} \label{lem:modify}
The graph $G$ is $(k,F)$-contractible if and only if the graph $G'$ is $(k,F')$-contractible.
\end{lemma}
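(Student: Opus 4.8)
\textbf{Proof plan for Lemma~\ref{lem:modify}.}

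The plan is to prove both implications by relating matching planarizing sets of $G$ to matching planarizing sets of $G'$, using the flatness of the embedding and the protective role of the kernel established in Lemma~\ref{lem:kernel}. First I would fix notation: let $K$ be the kernel of $h_+(H_r)$, let $I$ (following the paper's convention) denote $h_+(H_r)$ restricted to a suitable outer region so that $K$ is surrounded by several ``extended rings'' $R_i^+$, and recall that $r \ge 2k+3$ guarantees enough rings that at least one extended ring avoids any given size-$\le k$ set $S$. Note $G' = G \circ K$ and $F' = F \cup \{\text{edges joining } K \text{ to } G\setminus K\}$.

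For the forward direction, suppose $S$ is a matching planarizing set for $G$ with $|S|\le k$ and $S\cap F=\emptyset$; I may assume $S$ is minimal. By Lemma~\ref{lem:kernel}, no edge of $K$, and no edge incident to $K$, lies in $S$. Hence $S$ is disjoint from all the newly forbidden edges, so $S\cap F'=\emptyset$, and $S$ is still a matching in $G'$ (contracting $K$ does not merge any endpoints of edges of $S$, since those endpoints are disjoint from $K$). It remains to check $G'\circ S$ is planar: $G'\circ S = (G\circ S)\circ K'$ where $K'$ is the image of $K$ after contracting $S$ — but since $S$ avoids $K$ and its boundary, contracting $K$ in $G\circ S$ is just contracting a connected subgraph of the planar graph $G\circ S$, which preserves planarity. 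So $S$ witnesses $(k,F')$-contractibility of $G'$.

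For the reverse direction, suppose $S'$ is a matching planarizing set for $G'$ with $|S'|\le k$ and $S'\cap F'=\emptyset$. Since $S'$ avoids $F'$, it avoids every edge incident to the contracted kernel-vertex, so $S'$ consists of edges of $G$ none of which touch $K$; thus $S'$ is also a matching in $G$ disjoint from $F$. The task is to show $G\circ S'$ is planar. Here I would take a planar embedding of $G'\circ S'$, locate the vertex that is the image of $K$, and ``blow it back up'' by reinserting the kernel. The kernel $K$ is itself planar (it is a subgraph of the planar graph $h_+(H_r)$) with its attachment vertices lying on its outer cycle $h(C^1)$ in a fixed cyclic order; the key point is that in $G'\circ S'$ the kernel-vertex has all its neighbours appearing in exactly that cyclic rotation around it — because $S'$ did not touch the ring structure between $K$ and the rest of $G$, this rotational information is preserved. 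So I can splice a planar drawing of $K$ into the small disc around the kernel-vertex, matching up the boundary order, recovering a planar drawing of $G\circ S'$.

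The main obstacle is the reverse direction, specifically the claim that the cyclic order in which the attachments of $K$ present themselves around the contracted kernel-vertex in any planar embedding of $G'\circ S'$ is exactly (up to reflection) the order forced by the hexagonal grid, so that $K$ can be re-embedded compatibly. This is where flatness of $h(H_r)$ and the surrounding rings are essential: the rings of the grid that separate $K$ from the rest of $G$, being untouched by $S'$, enforce a unique combinatorial embedding of the ``collar'' around $K$ and hence pin down the rotation at the kernel-vertex; making this rigorous is the crux and is exactly the point that parallels the corresponding argument in Grohe~\cite{grohe}. The remaining steps (matching preservation, disjointness from the forbidden sets, and the easy direction of planarity under contracting a connected untouched subgraph) are routine.
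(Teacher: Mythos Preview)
Your proposal is correct and follows essentially the same approach as the paper. The paper's own proof is extremely terse: for the forward direction it invokes Lemma~\ref{lem:kernel} exactly as you do, and for the reverse direction it simply says to replace the contracted kernel-vertex by an embedding of $K$ ``in a manner completely analog to the one described in the proof of Lemma~\ref{lem:kernel}'' --- i.e., find an extended ring $R_i^+$ avoiding $S'$, use its 3-connectedness to fix the embedding of the collar, and re-insert the planar inner part --- which is precisely the ring/rotation argument you spell out in your final paragraph.
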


\begin{proof}
According to Lemma~\ref{lem:core}, a minimal planarizing set $S$ for $G$ avoiding $F$
does not contain any edges of $K$ and also the edges of $G$ from which $F'$ arises in $G'$. 
Therefore, it is also a planarizing set of $G'$ avoiding $F'$.

On the other hand, assume that $G'$ has a planarizing set $S$ disjoint from $F'$ of size at most $k$.
We want to show that $S$ is also a planarizing set for $G$. We consider an embedding of $G' \circ S$
and we replace the vertex created by contracting the core $K$ by an embedding of $K$ in a manner
completely analogous to the one described in the proof of Lemma~\ref{lem:core}. We get a planar
embedding of $G \circ S$.\qed
\end{proof}

\heading{Phase II.}
If the graph $G$ is $k$-contractible by a planarizing set $S$ such that $S \cap F = \emptyset$ and
$P(S,G)$ is satisfied, we say that $G$ is \emph{$(k,F)$-contractible with respect to $P$}. As we
process our algorithm, we use Lemma~\ref{lem:core} to modify We show next that when the tree-width
of $G$ is small, we can solve $(k,F)$-contractibility with respect to $P$ using Courcelle's
theorem~\cite{courcelle}. To this effect all we need to show is that it is possible to express
$(k,F)$-contractibility in the monadic second-order logic (MSOL).

\begin{lemma} \label{lem:minor}
For a fixed graph $H$, there exists an MSOL formula $\mu_H(S,G)$ which is satisfied if and only if
$G' := G \circ S$ contains $H$ as a minor.
\end{lemma}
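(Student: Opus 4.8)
The plan is to express the statement ``$G \circ e_1 \circ \cdots \circ e_k$ contains $H$ as a minor'' directly in MSOL, using the edge variables $e_1,\dots,e_k$ as free parameters. The starting point is the standard fact that, for a fixed host graph $H$ on vertices $\{h_1,\dots,h_p\}$, ``$G^\star$ contains $H$ as a minor'' is MSOL-expressible on $G^\star$: one quantifies existentially over $p$ vertex sets $B_1,\dots,B_p$ (the branch sets) and requires that (i) the $B_i$ are pairwise disjoint, (ii) each $B_i$ induces a connected subgraph, and (iii) for every edge $h_ih_j$ of $H$ there is an edge of $G^\star$ with one endpoint in $B_i$ and one in $B_j$. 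Connectivity of a vertex set is itself MSOL-expressible (no proper nonempty subset of $B_i$ is closed under taking neighbours within $B_i$), so the whole statement is a single MSOL sentence $\nu_H$ over $G^\star$.

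The only subtlety is that $\nu_H$ talks about $G^\star = G \circ e_1 \circ \cdots \circ e_k$, not about $G$ itself, whereas we want a formula evaluated on $G$. To fix this I would replace every reference to a ``vertex of $G^\star$'' by a reference to the corresponding vertex class of $G$ under the contraction of $e_1,\dots,e_k$. Concretely, the contraction of the $k$ fixed edges partitions $V(G)$ into classes: each class is either a singleton or the connected component of the subgraph $(V(G), \{e_1,\dots,e_k\})$ containing that vertex. Since $k$ is fixed, ``$u$ and $v$ lie in the same contracted class'' is expressible by a first-order formula $\mathrm{same}_{e_1,\dots,e_k}(u,v)$: it is a disjunction, over the (boundedly many) ways of writing a $u$--$v$ path using only edges among $e_1,\dots,e_k$, of the assertion that the chosen $e_i$'s form such a path. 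Then I rewrite $\nu_H$ as follows: the branch sets $B_1,\dots,B_p$ are still quantified over subsets of $V(G)$, but we additionally require each $B_i$ to be closed under $\mathrm{same}_{e_1,\dots,e_k}$ (so it is a union of contracted classes); disjointness and the edge condition (iii) are stated verbatim using edges of $G$ (an edge of $G^\star$ between classes corresponds to an edge of $G$ between them, after discarding loops, which is harmless for minors); and connectivity of $B_i$ in $G^\star$ is replaced by connectivity of $B_i$ in $G$ together with the $\mathrm{same}$-closure, which is equivalent because contracting edges inside $B_i$ neither connects nor disconnects it. Taking the conjunction of all these MSOL-definable conditions and quantifying out $B_1,\dots,B_p$ yields the desired formula $\mu_H(e_1,\dots,e_k)$.

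I expect the main obstacle to be the bookkeeping that makes ``after contracting $e_1,\dots,e_k$'' an MSOL-definable reinterpretation of $G$ rather than a genuine change of structure: one has to be careful that $e_1,\dots,e_k$ need not form a matching here, so contracted classes can be larger than two vertices, and that $\mathrm{same}_{e_1,\dots,e_k}$ correctly captures the transitive closure of the relation ``joined by one of the $e_i$''. Because $k$ is a fixed constant this transitive closure has bounded ``diameter'' and is therefore first-order definable (a bounded disjunction over path shapes), so no extra second-order quantifier is needed for it; once this is in place, the remaining conditions are routine instances of the well-known MSOL-definability of minor containment and of connectivity. Finally I would note that the formula depends only on $H$ and on $k$, and that its length is bounded by a function of $|H|$ and $k$, which is exactly what is needed to feed it into Courcelle's Theorem~\ref{thm:courcelle} in Phase II; combining $\mu_H$ for the finitely many forbidden minors $K_5$ and $K_{3,3}$ with an existential quantification over the (at most $k$) contracted edges avoiding $F$ then gives the MSOL formula deciding $(k,F)$-contractibility on graphs of small tree-width.
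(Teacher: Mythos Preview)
Your proposal is correct and follows essentially the same approach as the paper: start from the standard MSOL formula for ``$H$ is a minor'' via branch sets $V_1,\dots,V_p$, and adapt it to $G\circ e_1\circ\cdots\circ e_k$ by forcing each branch set to be closed under the contracted edges. The paper does this more tersely by simply adding, for each $e_j=u_jv_j$, the condition $u_j\in V_i \iff v_j\in V_i$; this already forces each $V_i$ to be a union of contracted classes, so your explicit construction of the first-order $\mathrm{same}_{e_1,\dots,e_k}$ relation (via bounded-length paths) is unnecessary, though not wrong.
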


\begin{proof}
We modify a well-known formula $\widetilde\mu_H(G)$ for testing whether $H$ is a minor of $G$. For $|H|
= \ell$, the formula $\widetilde\mu_H(G)$ tests whether there exist disjoint sets of vertices
$V_1,\dots,V_\ell$ (representing the sets of vertices contracted to vertices of $H$) such that
\begin{packed_itemize}
\item for every $v_iv_j \in E(H)$ there exists an edge between $V_i$ and $V_j$ in $G$, and
\item each set $V_i$ is connected in $G$.
\end{packed_itemize}

Let $S = \{e_1,\dots,e_k\}$ and $e_j = x_jy_j$. To test whether $H$ is minor in $G'$, we
require for every $j \in \{1,...,k\}$ that each $V_i$ either contains both endpoints of $e_j$, or none of them. Formally,
$$\mu_H(S,G) = \widetilde\mu_H(G) \wedge \!\!\!
\bigwedge_{\substack{1 \le i \le \ell \\ 1 \le j \le k}} \!\!(x_j \in V_i \iff y_j \in V_i).$$
\vskip -2.5em
\qed
\end{proof}

\begin{lemma} \label{lem:msol}
There exists a formula $\varphi_k(F,G)$ which is satisfiable if and only if
$G$ is $(k,F)$-contractible with respect to the MSOL formula $P$.
\end{lemma}

\begin{proof}
This formula is defined as follows:
$$\varphi_k(F,G) := \exists S \subseteq E(G) : |S| \le k \wedge (S \cap F = \emptyset) \wedge
P(S,G)\,\wedge \neg\mu_{K_5}(S,G) \wedge \neg\mu_{K_{3,3}}(S,G).$$
\vskip -3em
\qed
\end{proof}

\begin{algorithm}[t!]
\caption{\restcontr{P}} \label{alg:fpt}
\begin{algorithmic}[1]
\REQUIRE A graph $G$ and an inclusion-closed and inert contraction-closed formula $P(S,G)$. 
\ENSURE A planarizing set $S$ of size at most $k$ satisfying $P(S,G)$ if it exists.
\medskip

\STATE Initialize the set of forbidden edges $F := \emptyset$.
\STATE Depending on $k$, choose suitable $s \ge 1$ and $t \ge 1$ for Theorem~\ref{thm:robsey}.
\medskip
\WHILE{the tree-width of $G$ is larger than $t$}
	\STATE Find an embedding $h : H_s \embed G$ using Theorem~\ref{thm:robsey}.
	\STATE Find a subgrid $H_r$ such that $h \rest H_r$ is a flat embedding as described in Lemma~\ref{lem:flat}.
	\STATE Modify the graph as $G := G \circ K$ and $F := F \cup E_v$.
\ENDWHILE
\medskip
\RETURN a planarizing set $S$ satisfying $\varphi_k(F,G)$ using Theorem~\ref{thm:courcelle} if it exists.
\end{algorithmic}
\end{algorithm}

\heading{Putting all the Pieces Together.}  We finish this section with a proof of the announced
Theorem~\ref{thm:fpt}, stating that \restcontr{P}\ for an inclusion-closed and inert contraction-closed 
MSOL formula $P(S,G)$ is solvable in time $\O(n^2 \cdot f(k))$ for some function $f$.

\begin{proof}[Theorem~\ref{thm:fpt}]
See Algorithm~\ref{alg:fpt} for a pseudocode; Phase I corresponds to steps 3 to 7, Phase II corresponds to step 8. Depending on $k$, we choose a suitable value for $s$ so we can apply
Lemmas~\ref{lem:flat} and~\ref{lem:core}. By Theorem~\ref{thm:robsey}, there is a corresponding
value of $t$.

We repeat Phase I till the tree-width of $G$ becomes at most $t$. Every iteration of Phase I first finds embedding $h$ of a large
hexagonal grid $H_s$, by Theorem~\ref{thm:robsey} in linear time.  Using Lemma~\ref{lem:flat}, there
exists a subgrid $H_r$ such that $h(H_r)$ is flat.  Moreover, we can find such $H_r$ in time
$\O(k^2n)$ by testing planarity for all $h_+(H_{t_i,r_j})$. We contract the kernel $K$ and we modify
the graph $G$ and the set of forbidden edges $F$.  Lemmas~\ref{lem:core} and \ref{lem:modify} show
that this modification does not change the solvability of the problem.  After each modification, we get
a smaller graph $G$. Therefore we need to repeat this at most $\O(n)$ times, so the total running
time of Phase I is $\O(n^2 \cdot p(k))$ for some function $p$.

Let $G$ denote the original graph and let $G'$ denote the modified graph and let $F'$ denote the set of forbidden edges created by Phase I.  Phase
II uses Theorem~\ref{thm:courcelle} to solve the MSOL formula $\varphi_k(F',G')$ in time $\O(n \cdot
q(k))$.  By Lemmas~\ref{lem:flat}, \ref{lem:core} and \ref{lem:modify}, the modified graph $G'$ is
$(k,F')$ contractible if and only if the original graph is $k$-contractible. It remains to show that
none of the modifications changes the satisfiability of $P(S,G)$. Since $P(S,G)$ is
inclusion-closed, we can concentrate on inclusion-minimal planarizing sets in $G$ and $G'$. Further
by Lemma~\ref{lem:core}, each modification contracts some inert edges which are not incident with
any inclusion-minimal planarazing set $S$. Since $P(S,G)$ is inert contraction-closed, none of these modifications changes the solvability of $P(S,G)$.  So testing $\varphi_k(F',G')$ correctly tests
whether $G$ is $k$-contractible with respect to $P(S,G)$.

The overall complexity of the algorithm is $\O(n^2 \cdot f(k))$ for some function $f$.\qed
\end{proof}

\section{$\ell$-subgraph Contractibility} \label{sec:subcontr}

We first establish Corollary~\ref{cor:subcontr} which states that for a fixed $\ell$, testing
$\ell$-subgraph contractibility can be done in time $\O(n^2 \cdot f_\ell'(k))$ for some function
$f'_\ell$. It follows from Theorem~\ref{thm:fpt} and the fact that $\ell$-subgraph contractibility is
expressible using MSOL.

\begin{proof}[Corollary~\ref{cor:subcontr}]
We just describe in words how to construct the MSOL formula $P$ and we check that $P$ is inclusion-closed
and inert contraction-closed. The length of the formula may depend on $k$ and $\ell$.  The formula
$P$ tests whether there exists a decomposition of the edges in $S$ into pairwise disjoint sets
$E_1,\dots,E_k$ satisfying the following properties. Let $V_1,\dots,V_k$ denote the corresponding
sets of vertices incident with $E_1,\dots,E_k$, respectively. The formula $P$ is satisfied if and
only if $|V_i| \le \ell$ for each $i$ and the sets $V_1,\dots,V_k$ are pairwise disjoint. This solves
$\ell$-subgraph contractibility and is an inclusion-closed MSOL formula.

It remains to show that $P$ is inert contraction-closed. Assume that $P(S,G)$ is satisfiable and let
$S$ be an inclusion-minimal planarizing set satisfying $P(S,G)$. For every inert set $B$, by
definition, no edge of $B$ is incident with an edge of $S$. Therefore $S$ is a planarizing set of $G
\circ B$ consisting of $\ell$-subgraphs, and $P(S,G \circ B)$ is satisfiable. For the other
implication, if $S$ is an inclusion-minimal planarizing set satisfying $P(S,G \circ B)$, then it
also satisfies $P(S,G)$. The reason is that $B$ is inert, so no edge of $S$ is incident with an edge
of $B$.  Therefore, $S$ is a planarizing set of $G$ consisting of $\ell$-subgraphs.\qed
\end{proof}

\heading{Matching Contractibility.}
In the rest of this section, we establish \cNP-completeness of \subcontr. To do so, we first
introduce a new problem called \emph{matching contractibility}. The graph $G$ is \emph{$F$-matching
contractible} with respect to a set of edges $F$ if there exists a planarizing set $S$ which forms a
matching in $G$ and $S \cap F = \emptyset$.

\computationproblem
{\matchcontr}
{An undirected graph $G$ and a set of forbidden edges $F \subseteq E$.}
{Is $G$ an $F$-matching contractible graph?}

First, we show that $\ell$-subgraph contractibility can solve matching contractibility. 

\begin{lemma} \label{lem:subsolvesmatch}
Matching contractibility is polynomial-time reducible to $\ell$-subgraph contractibility, for any
fixed $\ell$.
\end{lemma}

\begin{proof}
For an input $G$ and $F$, we produce a graph $G'$ which is $\ell$-subgraph contractible if and only
if $G$ is $F$-matching contractible. We replace the edges of $G$ by paths:
\begin{packed_itemize}
\item if $e \in F$, then we replace it by a path of length $\ell$, and
\item if $e \notin F$, then we replace it by a path of length $\ell-1$.
\end{packed_itemize}
Also, we put $k = |E(G')|$ so only $\ell$-subgraphs restrict a planarizing set.

If a planarizing set $S$ is a matching in $G$ avoiding $F$, then we can contract the corresponding
paths in $G'$ by $\ell$-subgraphs. On the other hand, let $S'$ be a planarizing set of $G'$
consisting of $\ell$-subgraphs. First, we ignore each $\ell$-subgraph which does not contract one of
the paths of length $\ell - 1$, since its contraction preserves the topological structure of the graph.
If a path in $G'$ corresponding to $e \in E(G)$ is contracted, it has to be contracted by a single
$\ell$-subgraph.  In such a case, $e \notin F$, otherwise the path is too long. Also, the contracted
paths have to form a matching since the $\ell$-subgraphs cannot share the end-vertices belonging to
$G$. So the planarazing set $S'$ of $G'$ gives a planarizing set $S$ of $G$ which is a matching and
which avoids $F$.\qed
\end{proof}

\heading{Overview of the Reduction.}
To show \cNP-hardness of \matchcontr, we present a reduction from \clpsat.  An instance $I$ of
\clpsat\ is a Boolean formula in CNF such that each variable occurs in exactly three clauses, once
negated and twice positive, each clause contains two or three literals and the incidence graph of
$I$ is planar.  Fellows et al.~\cite{fellows} show that this problem is \cNP-complete.

Given a formula $I$, we construct a graph $G_I$ with a set $F_I$ of forbidden edges such that $G_I$
is $F_I$-matching contractible if and only if $I$ is satisfiable. The construction has 
a variable gadget $G_x$ for each variable $x$, and a clause gadget $H_c$ for each clause $c$.
All variable gadgets $G_x$ are isomorphic and we have two types of clause gadgets
$H_c$, depending on the size of $c$.  These gadgets consist of several copies of the graph $K_5$
with most of the edges in $F_I$. In Fig.~\ref{fig:gadgets}, the edges not contained in $F_I$ are
represented by bold lines. Each variable gadget contains three pendant edges that are identified
with certain edges of the clause gadgets, thus connecting the variable and clause gadgets.

\begin{figure}[t!]
\centering
\includegraphics{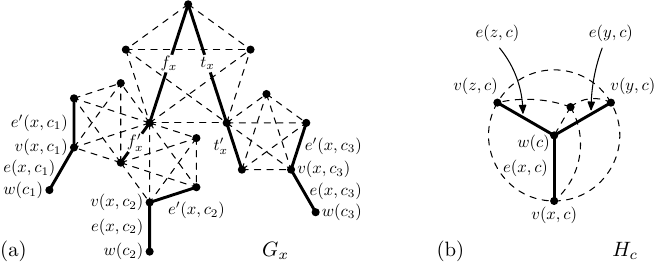}
\caption{The bold edges can be contracted and the dashed edges are forbidden edges from $F_I$. (a)
The variable gadget $G_x$ where the three outgoing edges are shared with clause gadgets $H_{c_1}$,
$H_{c_2}$ and $H_{c_3}$. (b) The clause gadget $H_c$. The edge $e(z,c)$ may also be in $F_I$ if the
clause contains only two variables $x$ and $y$. The two or three contractible edges are shared with
variable gadgets $G_x$, $G_y$ and possibly $G_z$.}
\label{fig:gadgets}
\end{figure}

\heading{Variable Gadget.}
Let $x$ be  a variable which occurs positively in clauses $c_1$ and $c_2$, and negatively in a clause
$c_3$. The corresponding \emph{variable gadget $G_x$} is depicted in Fig.~\ref{fig:gadgets}a. It consists
of four copies of $K_5$, each having all but two edges in $F_I$. Three of the copies of $K_5$ have
pendant edges attached, denoted by
$$e(x,c_i) = v(x,c_i)w(c_i),\qquad i \in \{1,2,3\};$$
refer to Fig.~\ref{fig:gadgets}a. These edges also belong to the clause gadgets $H_{c_1}$, $H_{c_2}$
and $H_{c_3}$. All other vertices and edges are private to the variable gadget $G_x$.

The main idea behind the variable gadget is that exactly one of the edges $t_x$ and $f_x$ is
contracted. This encodes the assignment of the variable~$x$ as follows: the edge $t_x$ is contracted
for true and $f_x$ for false. The edges $e(x,c_1)$ and $e(x,c_2)$ shared with the clause gadgets
$H_{c_1}$ and $H_{c_2}$, can be contracted only if $t_x$ is contracted, and $e(x,c_3)$ shared
with $H_{c_3}$ can be contracted if and only if $f_x$ is contracted.

\heading{Clause Gadget.}
Let $c$ be a clause containing variables $x$, $y$ and possibly $z$. The \emph{clause gadget $H_c$}
is the graph $K_5$ with all but 2 or 3 edges in $F_I$. The edges that are not forbidden to contract
share a common vertex $w(c)$ and they are the edges $e(x,c)$, $e(y,c)$ and possibly $e(z,c)$ shared
with the variable gadgets $G_x$, $G_y$ and possibly $G_z$; see Fig.~\ref{fig:gadgets}b.

To make the clause gadget planar, we need to contract exactly one of the edges $e(x,c)$, $e(y,c)$ and
possibly $e(z,c)$. This is possible only if the clause is satisfied by the corresponding variable
evaluated as true in this clause.

\begin{lemma} \label{lem:reduction}
The graph $G_I$ is $F_I$-matching contractible if and only if $I$ is satisfiable.
\end{lemma}

\begin{proof}
\noindent$\Longrightarrow$: Suppose first that $G_I$ is $F_I$-matching contractible, and let $S
\subseteq E(G_I) \setminus F_I $ be a matching planarizing set.  Using $S$, we construct a satisfying assignment of
$I$. Consider a variable $x$. In $G_x$, each copy of $K_5$ needs to have at least one edge contracted by
$S$.

Exactly one of $t_x$ and $f_x$ is in $S$. If $t_x \in S$, then $t'_x$ cannot be in $S$ (note that $S$
is a matching), hence $e'(x,c_3) \in S$, and $e(x,c_3)$ cannot be in $S$. On the other hand, if
$t_x \notin S$, necessarily $f_x \in S$, and by a similar sequence of arguments, none of $e(x,c_1)$
and $e(x,c_2)$ is in $S$.

We define a truth assignment for the variables of $I$ so that $x$ is true if and only if $t_x \in S$. It
follows that if $x$ appears as a false literal in a clause $c$, then the edge $e(x,c)$ is not in $S$.
Since $S$ contains exactly one edge of $H_c$, in each clause gadget at least one literal
must be evaluated to true. Thus $I$ is satisfiable.
\medskip

\noindent$\Longleftarrow$: Suppose that $I$ is satisfiable and fix a satisfying truth assignment
$\phi$. We set
$$
S=\{t_x,f'_x,e(x,c_1),e(x,c_2),e'(x,c_3) \mid \phi(x)=\text{true}\} \cup
\{f_x,t'_x,e'(x,c_1),e'(x,c_2),e(x,c_3) \mid \phi(x)=\text{false}\}.
$$
The edges of S contained in variable gadgets form a matching. Each clause gadget contains at least one edge
of $S$. But if a clause, say $c$, contains more than one literal evaluated as true, then its clause
gadget $H_c$ contains in $S$ more edges with the common vertex $w(c)$. In such a case, we perform a
\emph{pruning operation} on $H_c$, i.e, remove all edges from $S \cap E(H_c)$ but one. The resulting
set $S'$ is a matching such that each $K_5$ in $G_I$ contains exactly one edge of $S'$.

It only remains to argue that this set $S'$ is a planarizing set. The graph $G' = G \circ S'$
consists of copies of $K_4$ glued together by vertices or edges. Each copy is attached to other
copies by at most three vertices. Since $K_4$ itself has a non-crossing drawing in the plane such
that three of its vertices lie on the boundary of the outer face, and these vertices can be chosen
arbitrarily as well as their cyclic order on the outer face, the drawings of the variable and clause
gadgets can be combined together along a planar drawing of the incidence graph of $I$.
Therefore, $G'$ is planar.\qed
\end{proof}

This shows that \subcontr\ is \cNP-complete for every $\ell \ge 2$:

\begin{proof}[Proposition~\ref{prop:npc}]
Clearly, \subcontr\ belongs to \cNP. By Lemma~\ref{lem:reduction} and~\cite{fellows}, the problem
\matchcontr\ is \cNP-hard. Lemma~\ref{lem:subsolvesmatch} implies that \subcontr\ is \cNP-hard.\qed
\end{proof}

We note that the problem \subcontr\ remains \cNP-complete when generalized to surfaces of a fixed genus $g$
(instead of planar graphs). Consider a graph $H_g$ such that for every embedding of $H_g$ into the
surface, each face is homeomorphic to the disk. We modify our reduction by taking $G_I \cup H_g$ as
the graph and by adding all the edges of $H_g$ into $F$. For each surface, there exists such a graph
$H_g$ (see triangulated surfaces in Mohar and Thomassen~\cite{mohar}).

\section{Simplifying Grohe's Approach} \label{sec:grohe}

Grohe~\cite{grohe} gives an FPT algorithm for computing the crossing number $k$ of a graph, in time
$\O(n^2 f(k))$ for some function $f$. Our FPT algorithm is based on his approach. On the other hand,
we can simplify his argument in a similar manner as in the proof of Lemma~\ref{lem:flat}. We
describe this simplification here.

Grohe uses Thomassen's Theorem~\cite{thomassen} which states the following:

\begin{theorem}[Thomassen~\cite{thomassen}]
Let $G$ be a graph of genus at most $k$. For every $r \ge 1$, there is $s \ge 1$ such that for every
topological embedding $h: H_s \to G$, there exists a subgrid $H_r \subseteq H_s$ such that the
restriction $h \rest H_r$ of $h$ is flat.
\end{theorem}

\noindent This result can be used since for graphs the genus is upper bounded by the crossing number, so it
works as Lemma~\ref{lem:flat} in our FPT algorithm. With our simplification, we can completely avoid
the notion of genus which is independent of the notion of crossing number.  Therefore, we do not
need to consider the more complicated theory of graphs on surfaces. Also, our proof is quite short
and elementary.

Consider a plane drawing $D$ of a graph $G$.  The \emph{facial distance} $d(x,y)$ of vertices $x,y$
in $G$ is the minimal number of intersections of $D$ and a simple curve $\calC$ connecting $x$ and
$y$, where $\calC$ avoids the vertices of $G$.  Let $\crossnumber(G)$ denote the crossing number of
a graph $G$.  We use Riskin's Theorem~\cite{riskin}:

\begin{theorem}[Riskin~\cite{riskin}] \label{thm:riskin}
If $G$ is a 3-connected cubic planar graph, then
$$\crossnumber(G + xy) = d(x,y),$$
where $G + xy$ is the graph $G$ with the added edge $xy$. 
\end{theorem}

\begin{lemma}
Let $G$ be a graph with $\crossnumber(G) \le k$. For every $r \ge 1$, there is $s \ge 1$ such that for every
topological embedding $h: H_s \to G$, there exists a subgrid $H_r \subseteq H_s$ such that the
restriction $h \rest H_r$ of $h$ is flat.
\end{lemma}

\begin{proof}
Similarly as in Lemma~\ref{lem:flat}, we define a hierarchy of grids as in
Fig.~\ref{fig:grohe_grid_hierarchy}. We choose $s \approx
(k+r)\cdot k$ so that $H_s$ contains $k+1$ \emph{disjoint subgrids} $H_{t_1},\dots,H_{t_{k+1}}$ of radius
$k+r$. Let $H'_{t_i}$ denote the principal subgrid $H_{t_i}^{r}$ of $H_{t_i}$. In this way, we get the
hierarchy:
$$H_s \supsetneq H_{t_i} \supsetneq H'_{t_i},\qquad \text{where } 1 \le i \le k+1.$$
We want to show using the pigeon-hole principle that for some $H'_{t_i}$ the embedding $h \rest
H'_{t_i}$ is a flat embedding.

\begin{figure}[t!]
\centering
\includegraphics{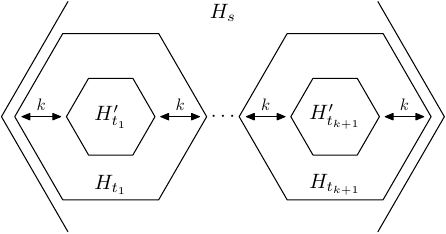}
\caption{The hierarchy of the hexagonal grids nested in $H_s$.}
\label{fig:grohe_grid_hierarchy}
\end{figure}

\begin{claim} \label{clm:grohe_pigeonhole}
Let $C$ be an $h(H_s)$-component which has a vertex of attachment in $h(H'_{t_i}) \setminus h(C_{t_i}^r)$, where $C_{t_i}^r$ is the $r$-th principal cycle of $H_{t_i}$.
Then $C$ has no attachment vertices in $h(H_s) \setminus h(H_{t_i})$.
\end{claim}

\begin{claim_proof}
For contradiction, let $x$ be an inner vertex of $h(H'_{t_i})$ which is a vertex of attachment of
$C$, and let $y$ be a vertex of attachment of $C$ in $h(H_s) \setminus h(H_{t_i})$.
Then there exists a path $P$ from $x$ to $y$ with no internal vertices in $h(H_s)$. Consider
$h(H_s)$ together with $P$ and apply Riskin's Theorem~\ref{thm:riskin}. (We note that subdividing
a graph preserves the crossing number.) Since the face distance $d(x,y)$ is at least $k+1$, we get
that $\crossnumber(G) > k$ which is a contradiction.\claimqed
\end{claim_proof}

Let $G_i$ be the graphs $h(H_{t_i})$ together with $h(H_{t_i})$-components with
vertices of attachment in $h(H'_{t_i}) \setminus h(C_{t_i}^r)$ (which are also $h(H_s)$-components, by
the previous claim).

\begin{claim} \label{clm:grohe_flat}
Let $G_i$ be the graph defined above. If $G_i$ is planar, then the embedding $h \rest H'_{t_i}$ is
flat.
\end{claim}

\begin{claim_proof}
Let $C$ be a proper $h(H'_{t_i})$-component. Exactly as in the proof of Claim~\ref{clm:flat}, we
construct from $C$ a proper $h(H_{t_i})$-component $C'$. Since $C'$ has a vertex of attachment in
$h(H'_{t_i}) \setminus h(C_{t_i}^r)$, it belongs to $G_i$.

The graph $h(H_{t_i})$ is a subdivision of a 3-connected planar graph, so it has a unique embedding
into the plane having $h(C_{t_i}^{r+k})$ as the outerface. Since $G_i$ is planar, the component $C'$
has to be embedded into a face bounded by a cell of $h(H'_{t_i})$ (which is the $h$-image of a
hexagon of $H'_{t_i}$). Therefore, $C'$ has no vertices of attachment in $h(H_{t_i}) \setminus
h(H'_{t_i})$ and $C = C'$. Thus, $h_+(H'_{t_i})$ is constructed from $G_i$ by removing $h(H_{t_i})
\setminus h(H'_{t_i})$, so it is planar.\claimqed
\end{claim_proof}

By Claim~\ref{clm:grohe_pigeonhole}, graphs $G_1,\dots,G_{k+1}$ are pairwise disjoint. 
Since $\crossnumber(G) \le k$, by the pigeon-hole principle, some $G_i$ is planar.
By Claim~\ref{clm:grohe_flat}, the embedding $h \rest H'_{t_i}$ is flat.\qed 
\end{proof}

\section{Open Problems} \label{sec:conclusions}

We conclude this paper with several open problems.

\begin{problem}
Let $P(S,G)$ be an inclusion-closed and inert contraction-closed MSOL formula.
Can the problem \restcontr{P} be solved in time $\O(n \cdot \widetilde f(k))$?
\end{problem}

\begin{problem}
Consider the generalization of \restcontr{P} which asks whether there exists a set $S \subseteq
E(G)$ such that $|S| \le k$ and $G \circ S$ is a graph of genus at most $g$. Is this problem
fixed-parameter tractable with the parameter $k$?
\end{problem}

\noindent In generalizing our approach, the main difficulty lies in Lemma~\ref{lem:flat}.

Asano and Hirata~\cite{Asano} proved that for an MSOL formula $P(S,G)$ which is always satisfied
\restcontr{P} is \cNP-complete.  In Section~\ref{sec:subcontr}, we show this for one particular
MSOL formula $P(S,G)$ which test whether $S$ consists of $\ell$-subgraphs. The very natural question
is for which formulas $P(S,G)$ it is \cNP-complete. Clearly, it is not for every formula $P(S,G)$.
For instance, if the formula $P(S,G)$ cannot be satisfied at all, the problem \restcontr{P} can be
solved by outputting ``no'' and clearly belongs to \cP.

\begin{problem}
For which MSOL formulas $P(S,G)$ is the problem \restcontr{P} \cNP-complete? 
\end{problem}

Theorem~\ref{thm:fpt} shows that for every inclusion-closed and inert contraction-closed formula
$P(S,G)$, the problem \restcontr{P} can be solved in FPT time with respect to the parameter $k$.
Can this be strengthened?

\begin{problem}
For which MSOL formulas $P(S,G)$ is the problem \restcontr{P} solvable in FPT time with respect to
the parameter $k$?
\end{problem}

\heading{Acknowledgement.} We would like to thank anonymous reviewers for comments which greatly
helped in improving this paper and fixing gaps in our arguments.

\bibliographystyle{elsarticle-num}
\bibliography{planar_contractions_journal}

\end{document}